\let\oldKwInput\KwInput
\renewcommand{\KwInput}[1]{%
  \makebox[\widthof{\KwOutput{}}][l]{\oldKwInput{}#1}%
}
\newtheorem{theorem}{Theorem}
\newtheorem{lemma}{Lemma}
\theoremstyle{definition}
\newtheorem{definition}{Definition}
\newtheorem{lemma_appendix}{Lemma A\ignorespaces}
\newenvironment{proofsketch}{%
  \proof}{\endproof}
\title{On Noisy Duplication Channels \\with Markov Sources}
\author{%
  \IEEEauthorblockN{Brendon McBain, James Saunderson, and Emanuele Viterbo}
  \IEEEauthorblockA{Department of Electrical \& Computer Systems Engineering, Monash University, Clayton, Australia}
}
\begin{document}

\maketitle

\begin{abstract}
    Channels with noisy duplications have recently been used to model the nanopore sequencer. This paper extends some foundational information-theoretic results to this new scenario. We prove the asymptotic equipartition property (AEP) for noisy duplication processes based on ergodic Markov processes. A consequence is that the noisy duplication channel is information stable for ergodic Markov sources, and therefore the channel capacity constrained to Markov sources is the Markov-constrained Shannon capacity. We use the AEP to estimate lower bounds on the capacity of the binary symmetric channel with Bernoulli and geometric duplications using Monte Carlo simulations. In addition, we relate the AEP for noisy duplication processes to the AEP for hidden semi-Markov processes.
\end{abstract}

\section{Introduction}
Recently, the nanopore sequencer in DNA storage systems has motivated the study of channels with noisy duplications \cite{BITSDNA}. In this application, a sequence of nucleotide molecules $\{\mathsf{A},\mathsf{T}, \mathsf{C}, \mathsf{G}\}$ pass through a microscopic pore that outputs a current response dependent on the nucleotides inside the pore. The mechanism that feeds the nucleotides through the pore varies in speed such that the sampled current signal contains a random number of samples per 
nucleotide. In the presence of measurement noise, this imperfect mechanism results in a channel with noisy duplications. This paper is an information-theoretic analysis of generic {\em noisy duplication channels}, not specifically related to the model of the nanopore sequencer.

Numerical capacity bounds of noisy duplication channels were computed in \cite{McBain2022,McBain2024} using Monte Carlo simulations to characterize the theoretical performance of the nanopore sequencer. However, the results were based on the assumption that the asymptotic equipartition property (AEP) holds for the channel processes of noisy duplication channels. The key contribution of this paper is proving that the AEP holds for discrete-time ergodic Markov sources on a finite state space. A consequence of our results is that the Markov-constrained channel capacity is the Markov-constrained Shannon capacity $C_{\mathsf{Markov}}$, which is a lower bound on the channel capacity $C$ for arbitrary sources. We compute numerical lower bounds on $C_{\mathsf{Markov}}$ for the binary symmetric channel (BSC) with Bernoulli and geometric duplications by choosing a $\mathsf{Ber}(1/2)$ source. These extend the numerical capacity bounds for the sticky channels in \cite{Mitzenmacher2007} that correspond to the noiseless cases. In addition, we relate the AEP for noisy duplication processes with the AEP for hidden semi-Markov processes (HSMPs), which has largely only been studied in the special case of semi-Markov processes (SMPs) \cite{Girardin2004, Girardin2006}.

\subsection{Noisy duplication channel}
The noisy duplication channel for the special case of Markov sources is conveniently described using SMPs. For a sequence of channel inputs $\{X_{i}\}$, the sequence of channel states is the discrete-time homogeneous Markov process (MP) $\{S_{\ell}\}$ on a finite state space $\Omega$. Since there is a one-to-one correspondence between the channel inputs and the channel states, we will conveniently consider the latter as being the input to the channel. This Markov source is {\em ergodic} if its Markov transition probability matrix $P$ is {\em irreducible}. 
For an arbitrary initial state $S_0$, their relationship is visualised as
\begin{align*}
    S_1 \overset{X_1}{\longrightarrow} S_2 \overset{X_2}{\longrightarrow} \ldots \overset{X_{m}}{\longrightarrow} S_m
\end{align*}
for $m$ channel-uses, where each arrow corresponds to an input that triggers a change in channel state. Next, the channel states are duplicated according to the i.i.d. {\em state duration process} $\{K_{\ell}\}$ on a discrete support $\Lambda$, and then sequentially concatenated together to form the SMP $\{Z_t\}$ on $\Omega$. For $m$ channel-uses, we have the duplicated states
\begin{align*}
    \underbrace{Z_1,Z_2,\ldots,Z_{T_1}}_{K_1}, \underbrace{Z_{T_1+1},\ldots,Z_{T_2}}_{K_2}, \ldots, \underbrace{Z_{T_{m-1}+1},\ldots,Z_{T_m}}_{K_m}
\end{align*}
where for convenience we additionally define the {\em jump time} $T_\ell = \sum_{i \leq \ell} K_i$ for the index of the last sample in the $\ell$-th segment, forming the jump time process $\{T_{\ell}\}$. This process is {\em semi}-Markov since the Markov property only holds at the jump time between each segment of duplications. Each duplicated state in $\{Z_t\}$ undergoes the channel mapping $f: \Omega \rightarrow \mathbb{R}$, which is identical for each duplicated state. 
Then it is corrupted by memoryless noise to give the HSMP $\{Y_t\}$. 
For a channel input $S_1^m$ of $m$ channel-uses, the channel output is $Y_1^{T_m}$ and the channel joint input-output is $(S_1^m, Y_1^{T_m})$.

\subsection{Entropy rates}

For the MP $\mathbb{S}=\{S_{\ell}\}$, the entropy rate is  $H(\mathbb{S}) = \lim_{m \rightarrow \infty} \frac{1}{m}H(S_1^m)$. The HSMP $\mathbb{Y}=\{Y_{t}\}$ has entropy rate $H(\mathbb{Y}) = \lim_{t \rightarrow \infty} \frac{1}{t} H(Y_1^{t})$. When this HSMP is randomly indexed as $Y_1^{T_m}$ using i.i.d. indexing process $\mathbb{T}=\{T_{\ell}\}$, the entropy rate is $H(\mathbb{Y}^{\mathbb{T}}) = \lim_{m \rightarrow \infty} \frac{1}{m} H(Y_1^{T_m})$. Similarly, we have the entropy rate of the joint process $(S_1^m, Y_1^{T_m})$ as $H(\mathbb{S}, \mathbb{Y}^{\mathbb{T}}) = \lim_{m \rightarrow \infty} \frac{1}{m} H(S_1^m, Y_1^{T_m})$. The existence of these two randomly-indexed entropy rates is proven in Lemma 1. Finally, let the (mutual) information rate of the noisy duplication channel with a Markov source be $I(\mathbb{S}; \mathbb{Y}^{\mathbb{T}}) = H(\mathbb{S}) + H(\mathbb{Y}^{\mathbb{T}}) - H(\mathbb{S},\mathbb{Y}^{\mathbb{T}})$.

\begin{lemma}
    The entropy rates $H(\mathbb{Y}^{\mathbb{T}})$ and $H(\mathbb{S},\mathbb{Y}^{\mathbb{T}})$ exist.
    \label{ent_limit_exists}
\end{lemma}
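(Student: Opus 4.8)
The plan is to prove existence of these two entropy-rate limits by establishing that the relevant sequences of joint entropies are subadditive, and then invoke Fekete's subadditivity lemma, which guarantees that a subadditive sequence $a_m$ satisfies $\lim_{m\to\infty} a_m/m = \inf_m a_m/m$. The key structural fact I would exploit is that the randomly-indexed output block $Y_1^{T_m}$ decomposes into $m$ consecutive segments, where the $\ell$-th segment has random length $K_\ell$ and is generated from channel state $S_\ell$. Crucially, conditioned on the Markov state sequence, the segments are generated by memoryless duplication and memoryless noise, so the cross-segment dependence is mediated entirely through the Markov chain $\{S_\ell\}$.

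First I would show that the sequence $a_m := H(S_1^m, Y_1^{T_m})$ is subadditive, i.e. $a_{m+n} \le a_m + a_n$. The idea is to split the $(m+n)$-use process at the boundary between channel-use $m$ and $m+1$. Writing $T_{m+n} = T_m + (T_{m+n}-T_m)$, the tail block $Y_{T_m+1}^{T_{m+n}}$ together with $S_{m+1}^{m+n}$ is, conditioned on $S_m$, distributed identically to an $n$-use process started from state $S_m$; and conditioning reduces entropy. Concretely I would use the chain rule
\begin{align*}
    H(S_1^{m+n}, Y_1^{T_{m+n}}) &= H(S_1^m, Y_1^{T_m}) \\
    &\quad + H(S_{m+1}^{m+n}, Y_{T_m+1}^{T_{m+n}} \mid S_1^m, Y_1^{T_m}),
\end{align*}
and then bound the conditional term by $H(S_{m+1}^{m+n}, Y_{T_m+1}^{T_{m+n}})$ via ``conditioning cannot increase entropy,'' followed by a stationarity/homogeneity argument to identify this with $a_n$. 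The same argument with the $S$-coordinates marginalized out handles $H(\mathbb{Y}^{\mathbb{T}})$, giving subadditivity of $b_m := H(Y_1^{T_m})$.

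The main obstacle, and the step requiring the most care, is the stationarity argument needed to equate the conditional tail entropy with $a_n$ (and $b_n$). The process $\{S_\ell\}$ is an arbitrary-initial-state irreducible Markov chain, not necessarily started from its stationary distribution, so the $n$-use tail beginning at state $S_m$ is not literally identically distributed to a fresh $n$-use process unless one is careful about the initial law. I would address this either by first passing to the stationary version of the chain (invariant distribution exists and is unique by irreducibility on a finite state space) and arguing the entropy rate is insensitive to the initial distribution, or by directly verifying that homogeneity of $P$ and the i.i.d. nature of $\{K_\ell\}$ make the conditional distribution of the tail, given $S_m = s$, identical to that of an $n$-use process with deterministic start $s$, so that the conditional entropy is upper-bounded uniformly and the subadditive bound still yields $a_n$ after averaging. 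A secondary technical point worth checking is finiteness: because $\Lambda$ may be infinite (e.g. geometric durations), $T_m$ and hence the block length $Y_1^{T_m}$ are unbounded, so I would confirm that $H(S_1^m, Y_1^{T_m})$ is finite for each fixed $m$ (using $\mathbb{E}[K_\ell] < \infty$ and the finite per-symbol output entropy) before the subadditivity machinery applies.

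Once subadditivity and finiteness are in hand, Fekete's lemma delivers both limits immediately as the infima of the normalized sequences, completing the proof. I expect the memorylessness of the noise and duplication, together with the Markov property at jump times, to be exactly what makes the cross-boundary conditional entropy collapse to the required single-block quantity.
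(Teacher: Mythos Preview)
Your proposal is correct and follows essentially the same route as the paper: show that $m\mapsto H(Y_1^{T_m})$ (and the joint version) is subadditive by splitting at an intermediate jump time, dropping the conditioning, and identifying the tail segment with a shorter block, then apply Fekete's lemma. The paper's argument is slightly terser---it writes $H(Y_1^{T_m})\le H(Y_1^{T_\ell})+H(Y_{T_\ell+1}^{T_m})$ and immediately sets $H(Y_{T_\ell+1}^{T_m})=(m-\ell)H_{m-\ell}$---thereby taking for granted exactly the stationarity/homogeneity point you flag as the ``main obstacle''; your plan to resolve it by passing to the stationary chain (and noting the entropy rate is insensitive to the initial law for an irreducible finite-state chain) is the standard fix and is more careful than what the paper actually writes down.
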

\begin{proof}
    Let $H_m = \frac{1}{m}H(Y_{1}^{T_{m}})$. Observe that
    \begin{align}
        m H_m &= H(Y_{1}^{T_{m}})\\
        &\leq H(Y_{1}^{T_{\ell}},Y_{T_{\ell + 1}}^{T_{m}}) ~~\text{for any}~\ell<m\\
        &= H(Y_{1}^{T_{\ell}}) + H(Y_{T_{\ell} + 1}^{T_{m}} | Y_{1}^{T_{\ell}}) \\
        &\leq H(Y_{1}^{T_{\ell}}) + H(Y_{T_{\ell} + 1}^{T_{m}})\\
        &= \ell H_{\ell} + (m-\ell) H_{m-\ell}
    \end{align}
    then $\{m H_m\}$ is sub-additive.
    By Fekete's lemma \cite[~Lemma 4A.2]{Gallager1968}, we have that $\lim_{m\rightarrow\infty}H_m = \liminf_{m \rightarrow \infty} H_m$ exists. Each of the above steps similarly applies when setting $H_m = \frac{1}{m}H(S_1^m, Y_{1}^{T_{m}})$.
\end{proof}

\section{Asymptotic equipartition property}

\subsection{Noisy duplication processes}
In this section, we consider the AEP for the output entropy rate $H(\mathbb{Y}^{\mathbb{T}})$ and the joint input-output entropy rate $H(\mathbb{S}, \mathbb{Y}^{\mathbb{T}})$. These results will be proven using almost identical techniques based on genie-aided markers. Since we cannot directly use ergodicity to prove the AEP through the Shannon-McMillan-Breiman theorem (SMB) \cite{Algoet1988}, we use the side-information of genie-aided markers to force ergodicity in the noisy duplication processes with respect to the sub-sequences between the markers in order to apply SMB and indirectly prove the AEP.

\begin{theorem}[Output AEP]
If $\{S_{\ell}\}$ is ergodic, then
\begin{align}
    -\frac{1}{m}\log \mathbb{P}(Y_1^{T_m}) \xrightarrow[]{\mathbb{P}}  H(\mathbb{Y}^{\mathbb{T}}) .
\end{align}
\label{output_AEP}
\end{theorem}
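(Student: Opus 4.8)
The plan is to convert the channel-use normalization (by $m$) into the time normalization (by $t$) through the renewal structure of the jump times, and then invoke the Shannon--McMillan--Breiman theorem (SMB) \cite{Algoet1988} for the time-indexed output process. Writing $-\frac{1}{m}\log\mathbb{P}(Y_1^{T_m}) = \frac{T_m}{m}\cdot\left(-\frac{1}{T_m}\log\mathbb{P}(Y_1^{T_m})\right)$, the first factor is controlled by the strong law of large numbers applied to the i.i.d. durations $\{K_\ell\}$, giving $T_m/m \to \mathbb{E}[K]$ almost surely, while the second factor should converge to the time-normalized entropy rate $H(\mathbb{Y})$ by SMB evaluated along the (almost surely diverging) random index $t=T_m$. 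The product then converges almost surely to $\mathbb{E}[K]\,H(\mathbb{Y})$, which I would identify with $H(\mathbb{Y}^{\mathbb{T}})$ by combining Lemma~\ref{ent_limit_exists} with the same time-change applied at the level of entropies.

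The technical engine is establishing that the stationary version of the HSMP $\{Y_t\}$ is ergodic so that SMB applies. Since $\{Y_t\}$ is a memoryless-noise factor of the semi-Markov state process $\{Z_t\}$, and $\{Z_t\}$ is generated by the irreducible finite chain $\{S_\ell\}$ together with the i.i.d. durations $\{K_\ell\}$, I would build the stationary regenerative version of $\{Z_t\}$ and argue ergodicity through regeneration at a fixed reference state of $\{S_\ell\}$: the sub-sequences delimited by consecutive visits to the reference state are i.i.d., and the noisy factor inherits ergodicity. This is exactly where the genie-aided markers enter---revealing a sparse set of these regeneration epochs makes the inter-marker sub-sequences conditionally i.i.d. (hence ergodic), so SMB can be applied between markers. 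Because SMB yields almost-sure convergence and $T_m\to\infty$ almost surely, convergence along the random index $T_m$ is automatic, and almost-sure convergence implies the asserted convergence in probability.

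The main obstacle will be showing that the genie markers are asymptotically free, i.e. that revealing the regeneration epochs perturbs $-\frac{1}{m}\log\mathbb{P}(Y_1^{T_m})$ by a normalized amount that vanishes in the limit. Equivalently, one must control the segmentation-ambiguity term $\frac{1}{m}\log\mathbb{P}(\text{markers}\mid Y_1^{T_m})$, which does not vanish if the markers are dense; I would therefore sparsify the markers and take a double limit, first in $m$ and then in the marker density, as is standard in block arguments for the AEP. A secondary difficulty is measure-theoretic: since the $Y_t$ are real-valued, $\mathbb{P}(Y_1^{T_m})$ must be read as a density, and I would use the general-alphabet form of SMB \cite{Algoet1988} and verify the relevant densities exist. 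The remaining identification $\mathbb{E}[K]\,H(\mathbb{Y})=H(\mathbb{Y}^{\mathbb{T}})$ is routine given Lemma~\ref{ent_limit_exists}.
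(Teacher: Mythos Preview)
Your factorization hides a measure mismatch that SMB alone does not resolve. The quantity $\mathbb{P}(Y_1^{T_m})$ in the theorem is the law of the \emph{variable-length} output produced by exactly $m$ channel uses: for a realized string $y$ of length $\tau$ it equals $\mathbb{P}(T_m=\tau,\,Y_1^\tau=y)$. This is not the HSMP marginal $p_\tau(y)=\mathbb{P}(Y_1^\tau=y)$ that SMB for $\{Y_t\}$ controls; the two differ by the factor $\mathbb{P}(T_m=\tau\mid Y_1^\tau=y)$. Hence your ``second factor'' is actually $-\tfrac{1}{T_m}\log p_{T_m}(Y_1^{T_m}) - \tfrac{1}{T_m}\log\mathbb{P}(T_m\mid Y_1^{T_m})$, and only the first piece is handled by SMB along the random index $T_m$. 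The correction is salvageable---its expectation is $H(T_m\mid Y_1^{T_m})\le H(T_m)=o(m)$ and Markov's inequality then gives convergence in probability---but you have misdiagnosed the role of the markers: ergodicity of the stationary $\{Y_t\}$ already follows from the MP embedding of the SMP (this is exactly how the paper proves Theorem~\ref{hsmp_aep}), so regeneration at a reference state is not what is missing. The genuine ``asymptotically free'' step is controlling the $m$-dependence of the law you are taking the log of, i.e.\ the $T_m$ side-information, not the segmentation between regeneration epochs. You also lean on the identity $\mathbb{E}[K]\,H(\mathbb{Y})=H(\mathbb{Y}^{\mathbb{T}})$, which is Lemma~\ref{lem:rand_indexed_ent_rate} in the paper and is not a routine consequence of Lemma~\ref{ent_limit_exists}; its proof there uses a concentration bound for $M(t)$ and a squeeze argument.

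The paper's own proof avoids the time-change entirely. It stays in channel-use normalization and places deterministic markers $\mathcal{M}_{m,d}=\{T_{nd}:nd\le m\}$, so that the blocks $W_n=Y_{T_{(n-1)d}+1}^{T_{nd}}$ form an ergodic HMP over the $d$-step chain $V_n=S_{(n-1)d+1}^{nd}$. SMB on $\{W_n\}$ yields $g_{m,d}\to H_{\infty,d}$; then $H_{m,d}-H_m\le\tfrac{2}{d}H(T_d)\to 0$ (Lemmas~A\ref{ent_diff_ineq} and~A\ref{ent_jump_time}) together with Markov's inequality on the nonnegative difference $g_{m,d}-g_m$ close the triangle $g_m\to g_{m,d}\to H_{\infty,d}\to H$. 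This route never invokes $H(\mathbb{Y})$ and does not require ergodicity of the time-indexed process, which is why the paper can prove Theorem~\ref{output_AEP} independently of Theorem~\ref{hsmp_aep} and Lemma~\ref{lem:rand_indexed_ent_rate}.
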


The full proof of Theorem \ref{output_AEP} is given in Appendix \ref{appendixB}, which we now summarise as a proof sketch:

\begin{proofsketch} Let $\mathcal{M}_{m,d} = \{T_{nd} : nd \leq m \}$ be the markers up to the $m$-th symbol separated by $d$ symbols. Let $n = |\mathcal{M}_{m,d}|$ be the number of markers up to $m$, then we have $m = nd + i$ for $0 \leq i < d$. Let the sample entropy rate of $Y_1^{T_m}$ be $g_m = -\frac{1}{m}\log \mathbb{P}(Y_1^{T_m})$, and let the entropy rate be $H_{m} = \mathbb{E}[g_m] = \frac{1}{m} H(Y_1^{T_m})$ with limit $H = \lim_{m\rightarrow\infty} H_m$ (which exists from Lemma \ref{ent_limit_exists}). For a fixed $d$, let the sample entropy rate of $(Y_1^{T_m}, \mathcal{M}_{m,d})$ be $g_{m,d} = -\frac{1}{m}\log \mathbb{P}(Y_1^{T_m}, \mathcal{M}_{m,d})$, and let the entropy rate be $H_{m,d} = \mathbb{E}[g_{m,d}] = \frac{1}{m} H(Y_1^{T_m}, \mathcal{M}_{m,d})$ with limit $H_{\infty,d} = \lim_{m\rightarrow\infty} H_{m,d}$.

    \begin{itemize}
        \item \textbf{Split problem into three parts:} We need to show that $g_m \rightarrow H$ as $m\rightarrow\infty$. By the triangle inequality, we need only show the chain of convergences $g_m \rightarrow g_{m,d} \rightarrow H_{\infty,d} \rightarrow H$ for some $d=d(m)$ as $m\rightarrow\infty$.
        \item \textbf{Convergence of entropy rates with markers:} Adding markers can only increase the entropy rate such that $H_{m,d} - H_m \geq 0$. This increase in entropy rate is bounded as $H_{m,d} - H_m \leq \frac{2}{d}H(T_d)$ by Lemma A\ref{ent_diff_ineq} and converges to zero as the marker distance $d$ increases by Lemma A\ref{ent_jump_time}.
        \item \textbf{Convergence of sample entropy rates with markers:} Observe that $g_{m,d}$ is decreasing towards $g_m$ for increasing $d$ since the side-information of markers reduces the number of terms in the marginalisation of $T_1^m$. Then $g_{m,d} - g_{m} \geq 0$ and Markov's inequality says they converge since $\mathbb{E}[g_{m,d} - g_{m}] = H_{m,d} - H_m$.
        \item \textbf{Shannon-McMillan-Breiman theorem with markers:} Form a hidden MP (HMP) $\{W_n\}$ with $W_n = Y_{T_{(n-1)d}+1}^{T_{nd}}$ based on the MP $\{V_n\}$ with $V_n = S_{(n-1)d + 1}^{n d}$ in between the markers. The HMP is ergodic since the MP is ergodic~\cite{Leroux1992}. Then $g_{m,d} \rightarrow H_{\infty,d}$ as $m\rightarrow\infty$ by SMB.
    \end{itemize}\end{proofsketch}

\begin{theorem}[Joint AEP]
If $\{S_{\ell}\}$ is ergodic, then

\begin{align}
    -\frac{1}{m}\log \mathbb{P}(S_1^m, Y_1^{T_m}) \xrightarrow[]{\mathbb{P}}  H(\mathbb{S}, \mathbb{Y}^{\mathbb{T}}) .
\end{align}
\label{joint_AEP}
\end{theorem}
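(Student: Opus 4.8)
The plan is to mirror the four-step genie-aided-marker argument used for Theorem~\ref{output_AEP}, simply replacing the output sample entropy rate by the joint one throughout. Concretely, I would set $g_m = -\frac{1}{m}\log\mathbb{P}(S_1^m, Y_1^{T_m})$, $H_m = \mathbb{E}[g_m] = \frac{1}{m}H(S_1^m, Y_1^{T_m})$ with limit $H = \lim_m H_m$ (which exists by Lemma~\ref{ent_limit_exists}), and for fixed marker spacing $d$ the marker-augmented quantities $g_{m,d} = -\frac{1}{m}\log\mathbb{P}(S_1^m, Y_1^{T_m}, \mathcal{M}_{m,d})$ and $H_{m,d} = \mathbb{E}[g_{m,d}]$ with limit $H_{\infty,d}$. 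As before, it then suffices to establish the chain of convergences $g_m \to g_{m,d} \to H_{\infty,d} \to H$ along a suitable $d = d(m)$ and assemble the pieces with the triangle inequality.

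The first three steps transfer almost verbatim. Since the markers $\mathcal{M}_{m,d}$ are a deterministic function of the jump-time process alone, $H_{m,d} - H_m = \frac{1}{m}H(\mathcal{M}_{m,d} \mid S_1^m, Y_1^{T_m}) \geq 0$, and because conditioning only reduces entropy the same marker-entropy bound $H_{m,d} - H_m \leq \frac{2}{d}H(T_d)$ from Lemma~A\ref{ent_diff_ineq} applies and vanishes as $d \to \infty$ by Lemma~A\ref{ent_jump_time}, giving $H_{\infty,d} \to H$. For the sample entropy rates, $g_{m,d} - g_m \geq 0$ again holds because supplying the markers removes terms from the marginalisation over the jump times, and since $\mathbb{E}[g_{m,d} - g_m] = H_{m,d} - H_m \to 0$, Markov's inequality yields $g_{m,d} \to g_m$ in probability.

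The substantive step is the Shannon-McMillan-Breiman (SMB) application. I would form the joint block process $\{(V_n, W_n)\}$ with $V_n = S_{(n-1)d+1}^{nd}$ and $W_n = Y_{T_{(n-1)d}+1}^{T_{nd}}$, driven by the non-overlapping block chain of channel states, which is itself ergodic whenever $\{S_\ell\}$ is. Here $W_n$ is the noisy duplication of the states in $V_n$ using within-block i.i.d.\ durations and noise, so conditioned on the block state chain the pairs $(V_n, W_n)$ are independent across $n$; hence $\{(V_n, W_n)\}$ is a hidden Markov process and inherits ergodicity from the block chain, after which SMB delivers $g_{m,d} \to H_{\infty,d}$.

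I expect the ergodicity claim for this joint block process to be the main obstacle. In the output case the observation is a genuinely noisy function of the hidden chain, and ergodicity of the HMP follows from the cited result~\cite{Leroux1992}; here the observation $(V_n, W_n)$ additionally contains the state block $V_n$ deterministically, so I must verify that appending this directly-observed component to the noisy output does not break the hidden-Markov structure required by SMB---equivalently, that $(V_n, W_n)$ can still be written as an ergodic function of the block chain and independent per-block randomness. A secondary technical point is to confirm that the residual $i < d$ symbols beyond the last marker contribute only a vanishing $O(1/m)$ term to both $g_m$ and $H_m$, so that the block decomposition indeed controls the whole sequence.
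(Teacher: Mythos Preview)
Your proposal is correct and follows essentially the same route as the paper: the paper's proof of Theorem~\ref{joint_AEP} simply says to repeat the proof of Theorem~\ref{output_AEP} verbatim, replacing the HMP $\{W_n\}$ in the SMB step by the joint block process $U_n = (V_n, W_n)$, which is again an ergodic HMP over the ergodic block chain $\{V_n\}$. The only additional remark the paper makes---which resolves the concern you flag as the ``main obstacle''---is that $\{U_n\}$ is in fact itself a Markov process (since $V_n$ is part of the observation), though it is more convenient to cite the HMP ergodicity result of~\cite{Leroux1992} directly.
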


\begin{proof}This proof only differs from Theorem \ref{output_AEP} in that instead of showing ergodicity of the HMP $\{W_n\}$ we need only show ergodicity of the process $\{U_n\}$ for $U_n = (V_n, W_n)$. This is also an ergodic HMP with respect to ergodic MP $\{V_n\}$. It is actually a MP, however it is more convenient to invoke the ergodicity of HMPs \cite{Leroux1992}. \end{proof}

There are many consequences of these AEP results. In Section \ref{markov_cap}, we will use them to derive achievable lower bounds of channel capacity and demonstrate how they can be estimated in practice. Before that, let us explore another AEP with an interesting connection to the output AEP.

\subsection{Hidden semi-Markov processes}
For noisy duplication channels, we were interested in the case where there are $m$ inputs and $T_m$ duplicated samples at the output. Alternatively, we can choose the number of outputs to be $t$ samples, and therefore the number of inputs is a random variable $M(t)$ much smaller than $t$. We consider the AEP for discrete-time HSMPs over a finite state-space, extending the AEP for discrete-time SMPs over a finite state-space \cite{Girardin2004} (and was later extended to the Borel state-space \cite{Girardin2006}, which is not considered in this paper). The central idea behind proving our AEP will be based on embedding the SMP in a MP, for which we can leverage existing results.

\begin{definition}[Embedded SMP \cite{Johnson2014}]
The embedded SMP is a MP with transition probabilities
\begin{align}
    \begin{split}
        P(s, k| s', k') &= \begin{cases} 
      P(s|s') \mathbb{P}(K=k) &  k=1\\
      \mathbb{P}(K > k|K > k-1) & k = k' + 1, s = s'\\
      0 & \text{otherwise}
   \end{cases}
   \end{split}
\end{align}
for all $k',k \in \Lambda$ and for all $s', s \in \Omega$. 
\end{definition}

 An important property of the embedded duplication process is that it preserves ergodicity. Observe that the embedded duration for each channel input form a sequence of states with one possible transition, the ``extending'' transition, until it reaches the ``terminating'' transition. Therefore, the embedding MP is irreducible and thus obeys the ergodic theorem. This embedding of the SMP will be used to prove the AEP for HSMPs in Theorem \ref{hsmp_aep}. Further, this HSMP AEP will be related to the output AEP in Theorem \ref{output_AEP} through its entropy rate using Lemma \ref{lem:rand_indexed_ent_rate}.

\begin{lemma}[Randomly indexed entropy rate]
Let $H(\mathbb{Y}^{\mathbb{T}}) = \lim_{m \rightarrow \infty}\frac{1}{m}H(Y_1^{T_m})$ be the entropy rate for fixed-length inputs and variable-length outputs, and let $H(\mathbb{Y}) = \lim_{t \rightarrow \infty}\frac{1}{t}H(Y_1^{t})$ be the entropy rate for variable-length inputs and fixed-length outputs. Then $H(\mathbb{Y}^{\mathbb{T}}) = \mathbb{E}[K] H(\mathbb{Y})$.
\label{lem:rand_indexed_ent_rate}
\end{lemma}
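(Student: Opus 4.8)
The plan is to obtain the identity $H(\mathbb{Y}^{\mathbb{T}}) = \mathbb{E}[K]\,H(\mathbb{Y})$ not by manipulating the two entropy rates directly, but by recognising that one and the \emph{same} random sequence $b_m := -\frac{1}{m}\log\mathbb{P}(Y_1^{T_m})$ admits two probabilistic limits, one supplied by each AEP, and that these limits must therefore coincide. Concretely, I would start from the factorisation
\begin{equation*}
-\frac{1}{m}\log\mathbb{P}(Y_1^{T_m}) = \frac{T_m}{m}\cdot\left(-\frac{1}{T_m}\log\mathbb{P}(Y_1^{T_m})\right),
\end{equation*}
and evaluate the limit of each factor separately: the first is governed by the law of large numbers for the i.i.d.\ duration process, and the second by the HSMP AEP read along the \emph{random} length $T_m$.

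For the first factor, since $\{K_\ell\}$ is i.i.d.\ with finite mean and $T_m = \sum_{i\le m}K_i$, the strong law of large numbers gives $T_m/m \to \mathbb{E}[K]$ almost surely; in particular $T_m\to\infty$. For the second factor, the HSMP AEP of Theorem~\ref{hsmp_aep}—obtained by embedding the SMP in an ergodic Markov process (as in the preceding definition) and applying the Shannon--McMillan--Breiman theorem \cite{Algoet1988}, hence holding in the \emph{almost sure} sense—gives $-\frac{1}{t}\log\mathbb{P}(Y_1^{t})\to H(\mathbb{Y})$ a.s.\ as the deterministic length $t\to\infty$. Substituting the random length $t = T_m$ is legitimate precisely because the convergence is almost sure: on the probability-one event where both $-\frac{1}{t}\log\mathbb{P}(Y_1^{t})\to H(\mathbb{Y})$ and $T_m\to\infty$, we automatically obtain $-\frac{1}{T_m}\log\mathbb{P}(Y_1^{T_m})\to H(\mathbb{Y})$.

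Multiplying the two almost-sure limits, the product—that is, $b_m$ itself—converges almost surely, and therefore in probability, to $\mathbb{E}[K]\,H(\mathbb{Y})$. On the other hand, Theorem~\ref{output_AEP} asserts $b_m \xrightarrow[]{\mathbb{P}} H(\mathbb{Y}^{\mathbb{T}})$, whose underlying limit exists by Lemma~\ref{ent_limit_exists}. Since a sequence of random variables can converge in probability to at most one limit (up to almost-sure equality), and both candidate limits here are constants, they must be equal, giving $H(\mathbb{Y}^{\mathbb{T}}) = \mathbb{E}[K]\,H(\mathbb{Y})$.

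The step I expect to be most delicate is the random-index substitution $t\mapsto T_m$ in the HSMP AEP, and this is exactly where almost-sure convergence is indispensable: were the HSMP AEP available only as convergence in probability, plugging in the random, $Y$-dependent length $T_m$ could fail, because the coupling between $T_m$ and the realised sequence $Y_1^{T_m}$ need not respect in-probability convergence. I would therefore be careful to invoke the Shannon--McMillan--Breiman (almost-sure) form of the HSMP AEP rather than merely its in-probability statement, and to record the standing assumptions $\mathbb{E}[K]<\infty$ and $K\ge 1$ (the latter ensuring $T_m\to\infty$) under which the law of large numbers and the substitution both apply.
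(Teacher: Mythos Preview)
Your route is genuinely different from the paper's, and while it is elegant, it sits awkwardly inside the paper's logical order. The paper proves Lemma~\ref{lem:rand_indexed_ent_rate} by manipulating \emph{entropies} directly: it embeds the SMP into a Markov chain, shows $H(M(t))/t\to 0$ via a Hoeffding-type concentration bound, and then sandwiches $\tfrac{1}{t}H(Y_1^{T_{M(t)}}\mid M(t))$ using $|t-T_{M(t)}|\le\max\Lambda$ and the law of large numbers $T_{M(t)}/M(t)\to\mathbb{E}[K]$. This argument is self-contained and is subsequently \emph{used} in Theorem~\ref{hsmp_aep} to identify the a.s.\ limit $H(\mathbb{Y})$ with $H(\mathbb{Y}^{\mathbb{T}})/\mathbb{E}[K]$. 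Your proof inverts that dependency: you obtain the lemma as a corollary of Theorem~\ref{output_AEP} together with the convergence half of Theorem~\ref{hsmp_aep}, by uniqueness of limits in probability. You are careful to invoke only the embedding-plus-SMB part of Theorem~\ref{hsmp_aep}, which indeed does not need the lemma---but as written the paper's Theorem~\ref{hsmp_aep} cites Lemma~\ref{lem:rand_indexed_ent_rate}, so your organisation would require restating Theorem~\ref{hsmp_aep} so that the a.s.\ convergence stands on its own and the identification of the limit becomes the corollary. What your approach buys is brevity and a clean probabilistic picture; what the paper's approach buys is a stand-alone entropy identity that can be quoted before any AEP is available.

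There is one further gap you pass over. The quantity $\mathbb{P}(Y_1^{T_m})$ in Theorem~\ref{output_AEP} is the law of the variable-length output of exactly $m$ channel uses, i.e.\ $\mathbb{P}(Y_1^\tau=y_1^\tau,\,T_m=\tau)$; this is not the HSMP marginal $p_t(y_1^t)=\mathbb{P}(Y_1^t=y_1^t)$ evaluated at $t=T_m$, which is what your random-index substitution actually controls. The two differ by the factor $\mathbb{P}(T_m=\tau\mid Y_1^\tau=y_1^\tau)$, and you would still need to show that $-\tfrac{1}{m}\log$ of this factor vanishes. It does---its expectation is at most $\tfrac{1}{m}H(T_m)\to 0$ (cf.\ Lemma~A\ref{ent_jump_time}) and Markov's inequality finishes---but the step is missing from your sketch.
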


\begin{proof}
    Let $\tilde{Z}_1^t$ be an embedding of the randomly indexed MP $S_1^{M(t)}$.  Observe that the number of random inputs for $t$ output samples is $M(t) = \sum_{t} \mathbf{1}_{\mathcal{D}}[\tilde{Z}_t]$ where $\mathcal{D} = \{(s,k) \in \mathcal{S}\times \Lambda : k = 1\}$ is the set of embedding states that begin a new segment. With this, we will show that the entropy rate $H(M(t))/t$ goes to zero as $t \rightarrow \infty$.

    By Hoeffding's inequality for MPs \cite{Fan2021}, we have the concentration inequality $\mathbb{P}(|M(t) - \mathbb{E}[M(t)]| > t) \leq 2 \exp(-C t)$ for some constant $C > 0$ that depends on the convergence speed of the MP to its stationary distribution. Let $\mathcal{A}_t=\{|M(t) - \mathbb{E}[M(t)]| > t \}$, then we can partition the entropy of $M(t)$ and bound it as
    \begin{align}
    \begin{split}
        \frac{1}{t} H(M(t)) &= \frac{1}{t}H(M(t)|\mathcal{A}_t)\mathbb{P}(\mathcal{A}_t) \\
        &\quad\quad + \frac{1}{t}H(M(t)|\mathcal{A}^c_t)(1-\mathbb{P}(\mathcal{A}_t))
    \end{split}\\
    &\leq \frac{1}{t}H(M(t)) 2 e^{-C t} + \frac{1}{t}\log(C t)(1-2 e^{-C t})
\end{align}
 which uses the conditioning property, Hoeffding's inequality, and that $H(M(t) | \mathcal{A}^c_t) \leq \log (2 C t)$ since $M(t)$ conditioned on $\mathcal{A}^c_t$ is supported on the interval $[\mathbb{E}[M(t)] - t, \mathbb{E}[M(t)] + t]$.

 Observe that $|t - T_{M(t)}| \leq a$ for constant $a = \max \Lambda$ and for all $t$. Then $H(Y_1^t)/t$ is arbitrarily close to $H(Y_1^{T_{M(t)}})/t$ for sufficiently large $t$, and
 \begin{align}
     H(\mathbb{Y}) &= \lim_{t\rightarrow\infty} \frac{1}{t} H(Y_1^{T_{M(t)}})\\
     &= \lim_{t\rightarrow\infty} \frac{1}{t}[H(Y_1^{T_{M(t)}}|M(t)) + H(M(t))] \\
     &= \lim_{t\rightarrow\infty} \frac{1}{t}H(Y_1^{T_{M(t)}}|M(t))\\
     &= \lim_{t\rightarrow\infty} \sum_{m}\mathbb{P}(M(t) = m) \left[\frac{H(Y_1^{T_{m}})}{m \mathbb{E}[K]}\right] \label{eq:ri_ent_rate_limit}
 \end{align}
where (\ref{eq:ri_ent_rate_limit}) uses a sandwich bound on the conditional entropy $m H_m = H(Y_1^{T_m}) = H(Y_1^{T_{M(t)}}|M(t)=m)$, given by
\begin{align}
    \left(\frac{m}{T_{M(t)} + a}\right) H_m &\leq \frac{1}{t} H(Y_1^{T_m}) \leq \left(\frac{m}{T_{M(t)} - a}\right) H_m
\end{align}
which is derived from $|t - T_{M(t)}| \leq a$. By the (randomly-indexed) law of large numbers $T_{M(t)} / M(t) \rightarrow \mathbb{E}[K]$ as $t\rightarrow\infty$, we can squeeze $H(Y_1^{T_m})/t$ and show it has the limit $H(\mathbb{Y}^{\mathbb{T}})/\mathbb{E}[K]$. Since $\mathbb{E}[M(t)]$ is increasing in $t$, the left-hand tail of $\mathbb{P}(M(t) = m)$ is arbitrarily small up to any given $m$ for sufficiently large $t$, showing convergence of (\ref{eq:ri_ent_rate_limit}).\end{proof}

\begin{theorem}[HSMP AEP] \label{hsmp_aep}
If $\{Z_t\}$ is ergodic, then
\begin{align}
    -\frac{1}{t}\log \mathbb{P}(Y_1^{t}) \xrightarrow[]{a.s.} H(\mathbb{Y}) = \frac{H(\mathbb{Y}^{\mathbb{T}})}{\mathbb{E}[K]}.
\end{align}
\end{theorem}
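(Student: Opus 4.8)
The plan is to exploit the embedding of the SMP into an ordinary Markov process so that the fixed-output-length observation $Y_1^t$ becomes a genuine hidden Markov process (HMP) indexed in ordinary time, at which point the Shannon--McMillan--Breiman theorem (SMB) applies directly. This is in deliberate contrast to the output AEP of Theorem \ref{output_AEP}: there the random indexing $Y_1^{T_m}$ at fixed input length $m$ destroyed time-homogeneity and forced the genie-aided marker construction, whereas indexing by a fixed output length $t$ keeps the embedded dynamics time-homogeneous and removes that difficulty entirely. The second equality in the statement, $H(\mathbb{Y}) = H(\mathbb{Y}^{\mathbb{T}})/\mathbb{E}[K]$, is already supplied by Lemma \ref{lem:rand_indexed_ent_rate}, so the real work is the almost-sure convergence.

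First I would introduce the embedded chain $\{\tilde{Z}_t\}$ on $\Omega \times \Lambda$ given by the Definition, whose transitions encode the ``extend the current duration'' versus ``terminate and jump'' dynamics. By the remark preceding the theorem, embedding preserves ergodicity, so ergodicity of $\{Z_t\}$ implies the embedded chain is irreducible and hence ergodic. The key observation is that the output is $Y_t = f(Z_t)$ corrupted by memoryless noise, and $Z_t$ is recovered from $\tilde{Z}_t$ by projection onto its $\Omega$-component; hence, conditioned on the hidden path, the $Y_t$ are emitted independently from the single current hidden state, so $\{Y_t\}$ is an HMP driven by the ergodic Markov chain $\{\tilde{Z}_t\}$ through a memoryless emission kernel.

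Next I would invoke the fact that an HMP driven by an ergodic Markov chain is itself stationary ergodic with respect to the stationary law of the chain, as in \cite{Leroux1992}, exactly as used to justify the marker sub-sequences in Theorem \ref{output_AEP}. SMB \cite{Algoet1988} then yields $-\frac{1}{t}\log \mathbb{P}(Y_1^t) \to H(\mathbb{Y})$ almost surely, where $H(\mathbb{Y}) = \lim_{t\to\infty}\frac{1}{t}H(Y_1^t)$. An arbitrary initial state makes the process only asymptotically mean stationary rather than stationary, which the Algoet--Cover form of SMB accommodates. Finally, substituting the identity of Lemma \ref{lem:rand_indexed_ent_rate} gives the stated limit $H(\mathbb{Y}^{\mathbb{T}})/\mathbb{E}[K]$.

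I expect the main obstacle to be verifying that SMB is legitimately applicable when $\Lambda$ is countably infinite, as in the geometric case, since the embedded chain then lives on a countably infinite state space. One must check that the chain is positive recurrent so that a stationary distribution exists, and that the induced HMP has finite entropy rate, before ergodicity and SMB can be invoked. For finite $\Lambda$ this is immediate; for the geometric duration, positive recurrence together with $\mathbb{E}[K] < \infty$ should follow from the geometric tail of the duration law, but this recurrence-and-integrability verification is the delicate step that warrants care.
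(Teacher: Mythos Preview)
Your proposal is correct and follows essentially the same route as the paper: embed the SMP into the ergodic Markov chain $\{\tilde{Z}_t\}$ on $\Omega\times\Lambda$, observe that $\{Y_t\}$ is then an HMP driven by an ergodic chain so that \cite{Leroux1992} and SMB give the almost-sure convergence, and finish with Lemma~\ref{lem:rand_indexed_ent_rate} for the identity $H(\mathbb{Y})=H(\mathbb{Y}^{\mathbb{T}})/\mathbb{E}[K]$. Your added caveat about positive recurrence when $\Lambda$ is countably infinite is a reasonable point of care that the paper's terse proof leaves implicit.
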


\begin{proof}
Embed the SMP $Z_1^{t}$ on $\Omega$ into the MP $\tilde{Z}_1^t$ on $\Omega \times \Lambda$, such that the corresponding embedding output process $\tilde{Y}_1^t$ is a HMP. Since the embedding MP is ergodic, the AEP follows from the ergodicity of HMPs \cite{Leroux1992} and SMB, combined with Lemma \ref{lem:rand_indexed_ent_rate}.
\end{proof}

This theorem combined with the AEP for the output of the noisy duplication channel shows equivalence between the two AEPs. An interesting consequence of this observation is that it implies the true block length $M(t)$, which is unknown from the $t$ observed samples, dominates the marginalisation of $T_1^{M(t)}$ in the entropy rate. 
It is not clear if there are any practical implications from this, but it is nonetheless an interesting observation. It does, however, allow us to use the classical forward algorithm to estimate $H(\mathbb{Y}^{\mathbb{T}})$ with the embedding using techniques for finite-state channels \cite{Pfister2001,Arnold2001}.

\section{Markov-constrained channel capacity} \label{markov_cap}

An open problem of the noisy duplication channel is its channel capacity. If the channel were ergodic, the channel capacity  would be the Shannon capacity formula. However, this must be proven directly since we do not have ergodicity as a given. In the special case of ergodic Markov sources, this result turns out to be a corollary of the AEPs in Theorem \ref{output_AEP} and Theorem \ref{joint_AEP} since they imply information stability. 

\begin{theorem}
    If the Markov source $\{S_{\ell}\}$ with Markov transition matrix $P$ is ergodic, then the Markov-constrained channel capacity is the Markov-constrained Shannon capacity
\begin{align}
    C_{\text{Markov}} = \sup_{P \in \mathcal{P}} I(\mathbb{S}; \mathbb{Y}^{\mathbb{T}})
\end{align}
where $\mathcal{P}$ is the set of all ergodic Markov transition matrices.
\end{theorem}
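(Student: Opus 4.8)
The plan is to show that the Output AEP and Joint AEP together force the channel to be \emph{information stable} for every fixed ergodic Markov source, and then to invoke the classical fact that an information-stable channel has capacity equal to its mutual information rate (optimized over admissible inputs). Concretely, information stability here means that the normalized information density concentrates in probability on the mutual information rate $I(\mathbb{S};\mathbb{Y}^{\mathbb{T}})$, which is exactly the content that upgrades the operational capacity from the general Verd\'u--Han inf-information-rate formula to the supremum of $I(\mathbb{S};\mathbb{Y}^{\mathbb{T}})$.

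First I would write the normalized information density in a form that mirrors the paper's definition of $I(\mathbb{S};\mathbb{Y}^{\mathbb{T}})$:
\begin{align*}
    \frac{1}{m} \log \frac{\mathbb{P}(S_1^m, Y_1^{T_m})}{\mathbb{P}(S_1^m)\,\mathbb{P}(Y_1^{T_m})}
    &= \left[-\tfrac{1}{m}\log \mathbb{P}(S_1^m)\right] + \left[-\tfrac{1}{m}\log \mathbb{P}(Y_1^{T_m})\right] \\
    &\qquad - \left[-\tfrac{1}{m}\log \mathbb{P}(S_1^m, Y_1^{T_m})\right].
\end{align*}
The first bracket converges in probability to $H(\mathbb{S})$ by the Shannon--McMillan--Breiman theorem for the ergodic finite-state Markov source $\{S_\ell\}$; the second converges to $H(\mathbb{Y}^{\mathbb{T}})$ by the Output AEP (Theorem \ref{output_AEP}); and the third converges to $H(\mathbb{S},\mathbb{Y}^{\mathbb{T}})$ by the Joint AEP (Theorem \ref{joint_AEP}). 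Summing, the normalized information density converges in probability to $H(\mathbb{S}) + H(\mathbb{Y}^{\mathbb{T}}) - H(\mathbb{S},\mathbb{Y}^{\mathbb{T}}) = I(\mathbb{S};\mathbb{Y}^{\mathbb{T}})$, establishing information stability.

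With information stability in hand, I would then run the two standard halves of the coding theorem per channel use $m$. For achievability I would use a random-coding/joint-typicality argument over the typical set supplied by the Joint AEP, so that any rate below $I(\mathbb{S};\mathbb{Y}^{\mathbb{T}})$ is achievable; for the converse I would combine Fano's inequality with the convergence in probability of the information density to rule out rates above $I(\mathbb{S};\mathbb{Y}^{\mathbb{T}})$. This identifies the capacity attainable with a given ergodic source $P$ as $I(\mathbb{S};\mathbb{Y}^{\mathbb{T}})$, and taking the supremum over $P \in \mathcal{P}$ yields $C_{\text{Markov}} = \sup_{P \in \mathcal{P}} I(\mathbb{S};\mathbb{Y}^{\mathbb{T}})$.

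I expect the main obstacle to be the non-standard block geometry: the input $S_1^m$ has fixed length $m$ while the output $Y_1^{T_m}$ has random length $T_m$, so the channel is not a plain sequence of finite-dimensional channels indexed by one block length. The arguments must be phrased per input symbol $m$ (not per output sample), and I would need to check that the random indexing $T_m$ does not disturb the typicality decoder or the Fano step; the bounded-overshoot control $|t-T_{M(t)}|\le a$ used in Lemma \ref{lem:rand_indexed_ent_rate} is the kind of tool that keeps this in check. A secondary subtlety is that $\mathcal{P}$, the set of irreducible transition matrices, is not compact, since its closure contains reducible matrices, so the supremum need not be attained; I would therefore keep the statement as a supremum rather than claim a maximizing $P$.
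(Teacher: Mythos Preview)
Your proposal is correct and follows essentially the same route as the paper: decompose the normalized information density into the three sample entropies, apply the Output AEP, the Joint AEP, and SMB for the ergodic Markov source to obtain information stability, and then conclude. The only cosmetic difference is that the paper closes by citing the Verd\'u--Han general capacity formula (for which information stability collapses the inf-information-rate to $I(\mathbb{S};\mathbb{Y}^{\mathbb{T}})$), whereas you propose to sketch the achievability/converse directly via typicality and Fano; the substance is the same.
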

\begin{proof}
    Consider the information density $$i_m = \frac{1}{m} \log \mathbb{P}(S_1^m, Y_1^{T_m}) - \frac{1}{m} \log \mathbb{P}(Y_1^{T_m}) - \frac{1}{m} \log \mathbb{P}(S_1^m), $$ whose expectation is the finite-letter mutual information rate $\frac{1}{m} I(S_1^m; Y_1^{T_m})$. Observe that the first and second terms converge in probability to $H(\mathbb{S}, \mathbb{Y}^{\mathbb{T}})$ and $H(\mathbb{Y}^{\mathbb{T}})$, respectively, due to Theorem \ref{output_AEP} and Theorem \ref{joint_AEP},  and that the third term converges to $H(\mathbb{S})$ due to SMB. Consequently, the noisy duplication channel with an ergodic Markov source is {\em information stable}. 
 This is a special case of the general capacity formula in \cite{Verdu1994}, and implies that the Markov-constrained Shannon capacity is achievable using the capacity-achieving Markov source.
\end{proof}

 For channel capacity $C$ with an arbitrary source, we have $C_{\text{Markov}} \leq C$ since the optimal source may not be Markov. Even in the restricted case of Markov sources, finding the Markov source that achieves $C_{\text{Markov}}$ is an open problem. However, there exist heuristic algorithms \cite{McBain2022,McBain2024} based on the generalised Blahut-Arimoto algorithm (GBAA) \cite{Kavcic2001,Vontobel2008} for finite-state channels that can significantly improve upon the benchmark Markov
source with independent and uniformly distributed inputs (i.e., the maximum-entropy Markov source). 

In the following examples, we consider some simple capacity bounds of a noisy duplication channel that concatenates the BSC with the sticky channel in \cite{Mitzenmacher2007} for Bernoulli and geometric duplications.

\subsection{BSC with Bernoulli duplications}
The simplest non-trivial noisy duplication channel is a BSC with Bernoulli duplications. In particular, we consider a BSC with crossover probability $p$ and durations $K_{\ell} \sim \mathsf{Ber}(p_d)$ on $\Lambda = \{1,2\}$ with duplication probability $p_d$. Each source bit is duplicated with probability $p_d$, and then each bit goes through the BSC that corrupts it with probability $p$.

In Fig. \ref{fig:bsc_d}, we use the Monte Carlo technique from \cite{McBain2024} with a block length of $m=10^6$ to compute the achievable information rate $I_{\text{BSCD,Ber}}(p,p_d)$ of this channel in the case of a $\mathsf{Ber}(1/2)$ source. When $p_d=0$, the rate is equal to the capacity of the BSC, $C_{\text{BSC}}(p)$. When $p_d=1$, the rate is equal to the capacity of the BSC with $2$ looks, $C_{\text{BSC}^2}(p)$.
The red curve is when $p=0.1$, which starts at $C_{\text{BSC}}(0.1) = 0.5310$ and ends at $C_{\text{BSC}^2}(0.1) = 0.7421$. The blue curve is when $p=0.01$, which starts at $C_{\text{BSC}}(0.01) = 0.9192$ and ends at $C_{\text{BSC}^2}(0.01) = 0.9787$. When $p=0$ the BSC with Bernoulli duplications simplifies into a sticky channel with Bernoulli duplications with rate $I_{\text{SC,Ber}}(p_d)$ for a $\mathsf{Ber}(1/2)$ source and channel capacity $C_{\text{SC,Ber}}(p_d)$ for the capacity-achieving source, which can be computed numerically \cite{Mitzenmacher2007}. The capacity of the sticky channel with Bernoulli duplications is an upper bound on the capacity of the BSC with Bernoulli duplications. 

\subsection{BSC with geometric duplications}
We now consider a BSC with geometric duplications, which is a slightly more complex channel compared to the BSC with Bernoulli duplications. In particular, we consider a BSC with crossover probability $p$ and durations $K_{\ell} \sim \mathsf{Geom}(p_d)$ on $\Lambda = \{1,2, \ldots\}$ with duplication probability $p_d$.
Each source bit is repeatedly duplicated with probability $p_d$ until the first non-duplication with probability $1-p_d$, and then each bit goes through the BSC that corrupts it with probability $p$.

In Fig. \ref{fig:bsc_geom}, we once again use the Monte Carlo technique from \cite{McBain2024} with a block length of $m=10^6$ to compute the achievable information rate $I_{\text{BSCD,geom}}(p,p_d)$ of this channel in the case of a $\mathsf{Ber}(1/2)$ source. As noted in \cite{Mitzenmacher2007}, accurately computing rates with geometric duplications and a high $p_d$ is challenging, and therefore we stop at $p_d = 0.6$ (just over $2$ duplications on average) as they did. In addition, the geometric distribution is truncated after $15$ samples. Analogously to the previous example, when $p=0$ the BSC with geometric duplications simplifies into a sticky channel with geometric duplications with rate $I_{\text{SC,geom}}(p_d)$ for a $\mathsf{Ber}(1/2)$ source and channel capacity $C_{\text{SC,geom}}(p_d)$ for the capacity-achieving source, which can be computed numerically \cite{Mitzenmacher2007}. The capacity of the sticky channel with geometric duplications is an upper bound on the capacity of the BSC with geometric duplications.

\begin{figure}
    \centering

\begin{tikzpicture}[scale=.49]

\begin{axis}[%
width=6.028in,
height=4.754in,
at={(1.011in,0.642in)},
scale only axis,
xmin=0,
xmax=1,
xlabel style={font=\color{white!15!black}},
xlabel={$p_d$},
title style={font={\bfseries\Large}},
title={BSC with Bernoulli duplications},
ymin=0,
ymax=1,
ylabel style={font=\color{white!15!black}},
ylabel={Information rate [bits/symbol]},
axis background/.style={fill=white},
legend style={at={(0.97,0.03)}, anchor=south east, legend cell align=left, align=left, draw=white!15!black},
xmajorgrids,
ymajorgrids,
grid style = {
    dash pattern = on 0.025mm off 0.95mm on 0.025mm off 0mm, 
    line cap = round,
    gray,
    line width = 0.5pt
}
]
\addplot [color=black, dashdotted, line width=1.5pt]
  table[row sep=crcr]{%
0	0.999999999999999\\
0	0.999999999999999\\
0.01	0.941890816260903\\
0.02	0.903529204314827\\
0.03	0.872471827999638\\
0.04	0.846079220805141\\
0.05	0.823101018372947\\
0.06	0.802803008110957\\
0.07	0.784701715398122\\
0.08	0.768454823942257\\
0.09	0.75380745550931\\
0.1	0.740562685327654\\
0.11	0.728563998867609\\
0.12	0.717684198555393\\
0.13	0.707818046448059\\
0.14	0.698877193074358\\
0.15	0.690786567462707\\
0.16	0.683481730754314\\
0.17	0.676906871311268\\
0.18	0.671013210040627\\
0.19	0.665757619870457\\
0.2	0.661101226459302\\
0.21	0.65700779269788\\
0.22	0.653442811202527\\
0.23	0.650372313479935\\
0.24	0.647759279141918\\
0.25	0.645573186948353\\
0.26	0.643785825459013\\
0.27	0.642369477964945\\
0.28	0.641297395615439\\
0.29	0.640544408066522\\
0.3	0.640087558140451\\
0.31	0.63990673229976\\
0.32	0.639985138489138\\
0.33	0.640309481917889\\
0.34	0.640869785055021\\
0.35	0.641658899984763\\
0.36	0.642671828581371\\
0.37	0.643905001208042\\
0.38	0.645355652598419\\
0.39	0.647021359848146\\
0.4	0.648899713085444\\
0.41	0.650988087834434\\
0.42	0.653283723342928\\
0.43	0.655783967090158\\
0.44	0.658488238206166\\
0.45	0.661395969107992\\
0.46	0.664504337356062\\
0.47	0.667810347538328\\
0.48	0.671311134893091\\
0.49	0.675003754492474\\
0.5	0.678885048834423\\
0.51	0.682951748872949\\
0.52	0.687200580824393\\
0.53	0.691628326329581\\
0.54	0.696231856350294\\
0.55	0.701008148345336\\
0.56	0.705954289487047\\
0.57	0.711067468005852\\
0.58	0.7163449550933\\
0.59	0.721784079843452\\
0.6	0.727382199484402\\
0.61	0.733136666790729\\
0.62	0.739044796145644\\
0.63	0.745103829277136\\
0.64	0.751310901261846\\
0.65	0.757663007009785\\
0.66	0.764156968137317\\
0.67	0.770789399913255\\
0.68	0.777556677816577\\
0.69	0.784454903156056\\
0.7	0.791479867148484\\
0.71	0.798627012810389\\
0.72	0.805891393967659\\
0.73	0.813267630614103\\
0.74	0.820749859742125\\
0.75	0.828331680618995\\
0.76	0.83600609328077\\
0.77	0.843765428753641\\
0.78	0.851601269171559\\
0.79	0.859504355520567\\
0.8	0.867464480169633\\
0.81	0.875470360606313\\
0.82	0.883509489816051\\
0.83	0.891567957446935\\
0.84	0.899630234149957\\
0.85	0.907678909109032\\
0.86	0.915694367477163\\
0.87	0.923654389815051\\
0.88	0.931533649015048\\
0.89	0.939303070515965\\
0.9	0.94692900717037\\
0.91	0.954372157825779\\
0.92	0.961586123270611\\
0.93	0.968515434306943\\
0.94	0.975092783989093\\
0.95	0.981235004202728\\
0.96	0.986836935468206\\
0.97	0.991761436464375\\
0.98	0.995821283004047\\
0.99	0.998739377520123\\
1	0.999999999999999\\
1	0.999999999999999\\
};
\addlegendentry{$C_{\text{SC,Ber}}(p_d)$}
\addplot [color=black, line width=1.5pt]
  table[row sep=crcr]{%
0	0.999999999999999\\
0	0.999999999999999\\
0.01	0.941877500251233\\
0.02	0.903475375544611\\
0.03	0.872349689974538\\
0.04	0.84586066849119\\
0.05	0.822757881150872\\
0.06	0.802307245010562\\
0.07	0.78402556371694\\
0.08	0.767570902872386\\
0.09	0.752688826835084\\
0.1	0.739182873636537\\
0.11	0.726896973998443\\
0.12	0.715704323235117\\
0.13	0.705499992242934\\
0.14	0.69619582820417\\
0.15	0.68771682174995\\
0.16	0.679998448886296\\
0.17	0.672984681452433\\
0.18	0.666626468535992\\
0.19	0.660880557429126\\
0.2	0.655708564385075\\
0.21	0.651076232462292\\
0.22	0.646952831722163\\
0.23	0.64331066928925\\
0.24	0.640124685283847\\
0.25	0.63737211665281\\
0.26	0.635032215249799\\
0.27	0.633086009674533\\
0.28	0.631516102717327\\
0.29	0.630306498006517\\
0.3	0.629442450783657\\
0.31	0.628910338748068\\
0.32	0.628697549698866\\
0.33	0.628792383316407\\
0.34	0.62918396490832\\
0.35	0.629862169328595\\
0.36	0.630817553584416\\
0.37	0.632041296891837\\
0.38	0.633525147140727\\
0.39	0.635261372891749\\
0.4	0.637242720160887\\
0.41	0.639462373356245\\
0.42	0.641913919821896\\
0.43	0.64459131751821\\
0.44	0.647488865430041\\
0.45	0.650601176345751\\
0.46	0.65392315169293\\
0.47	0.657449958152426\\
0.48	0.661177005801921\\
0.49	0.665099927564734\\
0.5	0.669214559759584\\
0.51	0.673516923563116\\
0.52	0.678003207209768\\
0.53	0.682669748763096\\
0.54	0.687513019299504\\
0.55	0.692529606349418\\
0.56	0.697716197442519\\
0.57	0.703069563602706\\
0.58	0.708586542634927\\
0.59	0.714264022039922\\
0.6	0.720098921383921\\
0.61	0.726088173938401\\
0.62	0.732228707389578\\
0.63	0.738517423398188\\
0.64	0.744951175766589\\
0.65	0.751526746941691\\
0.66	0.75824082254785\\
0.67	0.765089963602565\\
0.68	0.772070576018275\\
0.69	0.779178876934203\\
0.7	0.786410857350888\\
0.71	0.793762240454363\\
0.72	0.801228434913487\\
0.73	0.808804482308859\\
0.74	0.816484997699542\\
0.75	0.824264102148143\\
0.76	0.83213534579647\\
0.77	0.840091619802376\\
0.78	0.848125055097358\\
0.79	0.856226905485593\\
0.8	0.864387412048935\\
0.81	0.872595645115021\\
0.82	0.880839319130978\\
0.83	0.889104574597074\\
0.84	0.897375719638215\\
0.85	0.905634921689234\\
0.86	0.91386183689084\\
0.87	0.922033160809394\\
0.88	0.930122078440594\\
0.89	0.938097583216513\\
0.9	0.945923622458467\\
0.91	0.953558007644582\\
0.92	0.96095099730878\\
0.93	0.968043408699098\\
0.94	0.974764022264008\\
0.95	0.98102586647033\\
0.96	0.986720599773331\\
0.97	0.991709326924328\\
0.98	0.995805694796918\\
0.99	0.998737675752777\\
1	0.999999999999999\\
1	0.999999999999999\\
};
\addlegendentry{$I_{\text{SC,Ber}}(p_d) = I_{\text{BSCD,Ber}}(0, p_d)$}

\addplot [color=blue, line width=1.5pt]
  table[row sep=crcr]{%
0	0.919425\\
0.01	0.862165\\
0.02	0.824648\\
0.03	0.794061\\
0.04	0.767522\\
0.05	0.745523\\
0.06	0.725767\\
0.07	0.708807\\
0.08	0.694028\\
0.09	0.677828\\
0.1	0.666242\\
0.11	0.652945\\
0.12	0.643419\\
0.13	0.632729\\
0.14	0.623329\\
0.15	0.615635\\
0.16	0.609151\\
0.17	0.601876\\
0.18	0.595262\\
0.19	0.589514\\
0.2	0.585397\\
0.21	0.581482\\
0.22	0.577946\\
0.23	0.574055\\
0.24	0.572244\\
0.25	0.56895\\
0.26	0.567253\\
0.27	0.565275\\
0.28	0.56321\\
0.29	0.561983\\
0.3	0.560157\\
0.31	0.559244\\
0.32	0.560264\\
0.33	0.560466\\
0.34	0.560731\\
0.35	0.560388\\
0.36	0.562932\\
0.37	0.563906\\
0.38	0.564969\\
0.39	0.565355\\
0.4	0.568157\\
0.41	0.570508\\
0.42	0.573238\\
0.43	0.574596\\
0.44	0.576952\\
0.45	0.580718\\
0.46	0.585131\\
0.47	0.58754\\
0.48	0.5909\\
0.49	0.595514\\
0.5	0.597714\\
0.51	0.602599\\
0.52	0.606358\\
0.53	0.60987\\
0.54	0.615837\\
0.55	0.620999\\
0.56	0.624575\\
0.57	0.630031\\
0.58	0.636433\\
0.59	0.643224\\
0.6	0.647512\\
0.61	0.653002\\
0.62	0.658412\\
0.63	0.66575\\
0.64	0.671703\\
0.65	0.679487\\
0.66	0.683964\\
0.67	0.692249\\
0.68	0.698324\\
0.69	0.70652\\
0.7	0.712325\\
0.71	0.720936\\
0.72	0.727981\\
0.73	0.737815\\
0.74	0.744643\\
0.75	0.751918\\
0.76	0.759078\\
0.77	0.769629\\
0.78	0.776799\\
0.79	0.785428\\
0.8	0.795054\\
0.81	0.800787\\
0.82	0.811894\\
0.83	0.820471\\
0.84	0.830012\\
0.85	0.840997\\
0.86	0.849386\\
0.87	0.859984\\
0.88	0.869508\\
0.89	0.877965\\
0.9	0.888353\\
0.91	0.898493\\
0.92	0.908995\\
0.93	0.918068\\
0.94	0.928341\\
0.95	0.939292\\
0.96	0.948512\\
0.97	0.957416\\
0.98	0.96636\\
0.99	0.973811\\
1	0.978744\\
};
\addlegendentry{$I_{\text{BSCD,Ber}}(0.01, p_d)$}

\addplot [color=red, line width=1.5pt]
  table[row sep=crcr]{%
0	0.532509\\
0.01	0.47789\\
0.02	0.448388\\
0.03	0.421977\\
0.04	0.403042\\
0.05	0.384971\\
0.06	0.368395\\
0.07	0.358712\\
0.08	0.346456\\
0.09	0.335641\\
0.1	0.325525\\
0.11	0.317455\\
0.12	0.309907\\
0.13	0.302894\\
0.14	0.29874\\
0.15	0.29193\\
0.16	0.287674\\
0.17	0.284199\\
0.18	0.280506\\
0.19	0.276632\\
0.2	0.273746\\
0.21	0.27156\\
0.22	0.27078\\
0.23	0.268331\\
0.24	0.266947\\
0.25	0.265817\\
0.26	0.264026\\
0.27	0.262664\\
0.28	0.263681\\
0.29	0.26167\\
0.3	0.260776\\
0.31	0.261375\\
0.32	0.263305\\
0.33	0.263071\\
0.34	0.263064\\
0.35	0.264377\\
0.36	0.266146\\
0.37	0.266553\\
0.38	0.268938\\
0.39	0.269726\\
0.4	0.271506\\
0.41	0.272421\\
0.42	0.275643\\
0.43	0.27725\\
0.44	0.279652\\
0.45	0.281976\\
0.46	0.285918\\
0.47	0.287823\\
0.48	0.289336\\
0.49	0.293156\\
0.5	0.296088\\
0.51	0.299938\\
0.52	0.302765\\
0.53	0.30698\\
0.54	0.310796\\
0.55	0.314676\\
0.56	0.317313\\
0.57	0.323229\\
0.58	0.327528\\
0.59	0.333071\\
0.6	0.336018\\
0.61	0.341658\\
0.62	0.345193\\
0.63	0.351249\\
0.64	0.358005\\
0.65	0.363889\\
0.66	0.367771\\
0.67	0.374729\\
0.68	0.379249\\
0.69	0.386173\\
0.7	0.392062\\
0.71	0.399888\\
0.72	0.406101\\
0.73	0.414838\\
0.74	0.420182\\
0.75	0.427576\\
0.76	0.435437\\
0.77	0.445908\\
0.78	0.453219\\
0.79	0.460934\\
0.8	0.470494\\
0.81	0.479213\\
0.82	0.48894\\
0.83	0.49799\\
0.84	0.508353\\
0.85	0.520208\\
0.86	0.530675\\
0.87	0.542975\\
0.88	0.553706\\
0.89	0.565927\\
0.9	0.579253\\
0.91	0.593286\\
0.92	0.608318\\
0.93	0.620462\\
0.94	0.637508\\
0.95	0.653528\\
0.96	0.669356\\
0.97	0.686868\\
0.98	0.707579\\
0.99	0.725721\\
1	0.742043\\
};
\addlegendentry{$I_{\text{BSCD,Ber}}(0.1, p_d)$}
\end{axis}
\end{tikzpicture}%
    
    \caption{Monte Carlo estimates of the information rate $I_{\text{BSCD,Ber}}(p,p_d)$ of the BSC with error probability $p$, Bernoulli duplications with probability $p_d$, and a $\mathsf{Ber}(1/2)$ source. The information rate $I_{\text{SC,Ber}}(p_d)$ is the case when $p=0$, which corresponds to a sticky channel with capacity $C_{\text{SC,Ber}}(p_d)$ and is computed numerically \cite{Mitzenmacher2007}. 
    }
    \label{fig:bsc_d}
\end{figure}
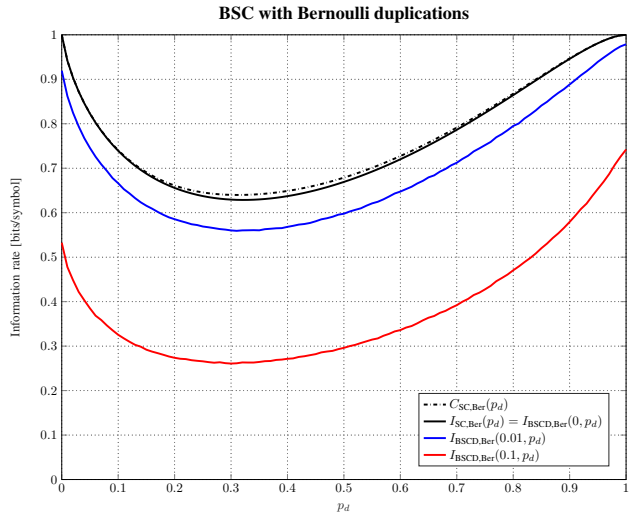

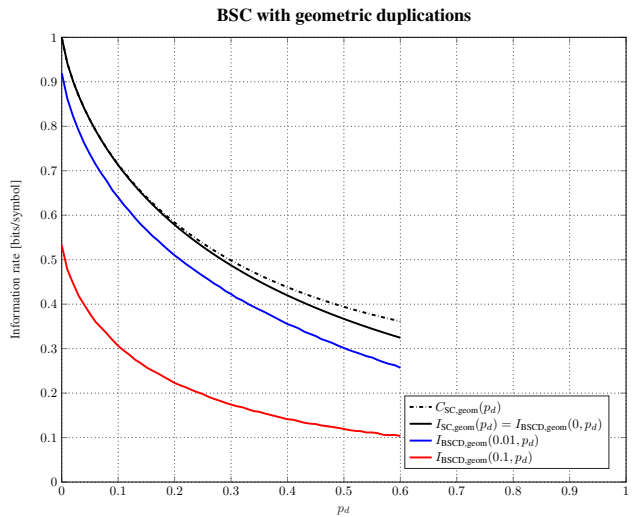
\begin{figure}
    \centering

\begin{tikzpicture}[scale=.49]

\begin{axis}[%
width=6.028in,
height=4.754in,
at={(1.011in,0.642in)},
scale only axis,
xmin=0,
xmax=1,
xlabel style={font=\color{white!15!black}},
xlabel={$p_d$},
title style={font=\bfseries\Large},
title={BSC with geometric duplications},
ymin=0,
ymax=1,
ylabel style={font=\color{white!15!black}},
ylabel={Information rate [bits/symbol] },
axis background/.style={fill=white},
legend style={at={(0.97,0.03)}, anchor=south east, legend cell align=left, align=left, draw=white!15!black},
xmajorgrids,
ymajorgrids,
grid style = {
    dash pattern = on 0.025mm off 0.95mm on 0.025mm off 0mm, 
    line cap = round,
    gray,
    line width = 0.5pt
}
]
\addplot [color=black, dashdotted, line width=1.5pt]
  table[row sep=crcr]{%
0	0.999977981012336\\
0.01	0.941314466120063\\
0.02	0.901657816739322\\
0.03	0.868763457742876\\
0.04	0.840087870397167\\
0.05	0.814442957612982\\
0.06	0.791139513457703\\
0.07	0.76972860999884\\
0.08	0.749895294803979\\
0.09	0.731407218577373\\
0.1	0.714086442875479\\
0.11	0.697792514650011\\
0.12	0.682411964524218\\
0.13	0.667851351632406\\
0.14	0.654032483035044\\
0.15	0.640888999101946\\
0.16	0.62836398040162\\
0.17	0.616408189755405\\
0.18	0.60497863669279\\
0.19	0.594037456234445\\
0.2	0.583551064145412\\
0.21	0.573489500706264\\
0.22	0.563825891653266\\
0.23	0.554536009174625\\
0.24	0.545597904679276\\
0.25	0.536991609512664\\
0.26	0.528698894297002\\
0.27	0.520703058944272\\
0.28	0.512988765765679\\
0.29	0.505541882799538\\
0.3	0.498349361042596\\
0.31	0.491399116250829\\
0.32	0.48467993697205\\
0.33	0.478181397116188\\
0.34	0.471893786330486\\
0.35	0.465808043608739\\
0.36	0.459915704107684\\
0.37	0.454208845773415\\
0.38	0.448680043632942\\
0.39	0.443322333070086\\
0.4	0.438129168952749\\
0.41	0.433094399949728\\
0.42	0.428212238649747\\
0.43	0.4234772358973\\
0.44	0.418884263512407\\
0.45	0.414428491549875\\
0.46	0.410105371997354\\
0.47	0.405910619399287\\
0.48	0.401840197827501\\
0.49	0.397890302911088\\
0.5	0.394057346779817\\
0.51	0.390337943064246\\
0.52	0.386728892281632\\
0.53	0.38322716691302\\
0.54	0.379829898986566\\
0.55	0.376534367406747\\
0.56	0.373337986885974\\
0.57	0.370238255873412\\
0.58	0.367229098705254\\
0.59	0.364293791091734\\
0.6	0.361431220231283\\
};
\addlegendentry{$C_{\text{SC,geom}}(p_d)$}

\addplot [color=black, line width=1.5pt]
  table[row sep=crcr]{%
  0	0.999977980844841\\
0.01	0.941300033523429\\
0.02	0.901602207295377\\
0.03	0.868640294878095\\
0.04	0.839870842613815\\
0.05	0.814105802896766\\
0.06	0.790656046269365\\
0.07	0.769072767581398\\
0.08	0.749041385564589\\
0.09	0.730329827034426\\
0.1	0.712760294103495\\
0.11	0.696192542865622\\
0.12	0.680513366993364\\
0.13	0.665629637698513\\
0.14	0.65146354033389\\
0.15	0.637949183112131\\
0.16	0.6250301410199\\
0.17	0.612657613860639\\
0.18	0.600789035847921\\
0.19	0.589386996897832\\
0.2	0.578418394298117\\
0.21	0.567853762781055\\
0.22	0.557666729756634\\
0.23	0.547833577223651\\
0.24	0.538332877780907\\
0.25	0.529145193661063\\
0.26	0.520252829814418\\
0.27	0.511639616796706\\
0.28	0.503290738301434\\
0.29	0.495192571742188\\
0.3	0.487332563997509\\
0.31	0.479699112725802\\
0.32	0.472281472912097\\
0.33	0.465069668527709\\
0.34	0.458054420169287\\
0.35	0.451227077682366\\
0.36	0.444579565145005\\
0.37	0.43810432737273\\
0.38	0.431794284698287\\
0.39	0.425642795563703\\
0.4	0.419643615659855\\
0.41	0.413790870691\\
0.42	0.40807902474209\\
0.43	0.402502852889921\\
0.44	0.397057421235606\\
0.45	0.391738062447308\\
0.46	0.386540358565513\\
0.47	0.38146012104645\\
0.48	0.376493377518531\\
0.49	0.371636355407316\\
0.5	0.366885469001667\\
0.51	0.362237307357518\\
0.52	0.357688623168088\\
0.53	0.353236322084698\\
0.54	0.348877453751498\\
0.55	0.344609202363056\\
0.56	0.340428880107299\\
0.57	0.336333917743856\\
0.58	0.332321859094719\\
0.59	0.328390354468654\\
0.6	0.324537153737304\\
};
\addlegendentry{$I_{\text{SC,geom}}(p_d) = I_{\text{BSCD,geom}}(0, p_d)$}

\addplot [color=blue, line width=1.5pt]
  table[row sep=crcr]{%
0	0.9194\\
0.01	0.8617\\
0.02	0.8229\\
0.03	0.7903\\
0.04	0.7615\\
0.05	0.737\\
0.06	0.7144\\
0.07	0.6942\\
0.08	0.6766\\
0.09	0.6557\\
0.1	0.6402\\
0.11	0.6234\\
0.12	0.6092\\
0.13	0.5937\\
0.14	0.5795\\
0.15	0.5672\\
0.16	0.5542\\
0.17	0.5437\\
0.18	0.5318\\
0.19	0.5205\\
0.2	0.5098\\
0.21	0.5004\\
0.22	0.4903\\
0.23	0.4816\\
0.24	0.4724\\
0.25	0.4637\\
0.26	0.4554\\
0.27	0.4462\\
0.28	0.4392\\
0.29	0.4296\\
0.3	0.4227\\
0.31	0.414\\
0.32	0.4084\\
0.33	0.4005\\
0.34	0.3942\\
0.35	0.3878\\
0.36	0.3819\\
0.4	0.3552\\
0.41	0.3507\\
0.42	0.3453\\
0.43	0.338\\
0.44	0.3331\\
0.45	0.3284\\
0.46	0.3215\\
0.47	0.3172\\
0.48	0.3125\\
0.49	0.3065\\
0.5	0.3014\\
0.51	0.2964\\
0.52	0.2921\\
0.53	0.288\\
0.54	0.283\\
0.55	0.2801\\
0.56	0.2744\\
0.57	0.2697\\
0.58	0.2656\\
0.59	0.2629\\
0.6	0.257\\
};
\addlegendentry{$I_{\text{BSCD,geom}}(0.01, p_d)$}

\addplot [color=red, line width=1.5pt]
  table[row sep=crcr]{%
0	0.5325\\
0.01	0.4775\\
0.02	0.4472\\
0.03	0.4188\\
0.04	0.3983\\
0.05	0.3784\\
0.06	0.3601\\
0.07	0.3472\\
0.08	0.3337\\
0.09	0.3194\\
0.1	0.3062\\
0.11	0.2956\\
0.12	0.2862\\
0.13	0.2751\\
0.14	0.2673\\
0.15	0.258\\
0.16	0.251\\
0.17	0.2446\\
0.18	0.2372\\
0.19	0.2303\\
0.2	0.2228\\
0.21	0.2176\\
0.22	0.213\\
0.23	0.2073\\
0.24	0.2021\\
0.25	0.198\\
0.26	0.1917\\
0.27	0.187\\
0.28	0.1828\\
0.29	0.1782\\
0.3	0.1742\\
0.31	0.1705\\
0.32	0.1683\\
0.33	0.1637\\
0.34	0.1596\\
0.35	0.1579\\
0.36	0.1541\\
0.4	0.1413\\
0.41	0.1401\\
0.42	0.1367\\
0.43	0.1333\\
0.44	0.1313\\
0.45	0.1304\\
0.46	0.1272\\
0.47	0.1257\\
0.48	0.1237\\
0.49	0.1218\\
0.5	0.1194\\
0.51	0.1168\\
0.52	0.1151\\
0.53	0.1144\\
0.54	0.1115\\
0.55	0.1115\\
0.56	0.1098\\
0.57	0.1062\\
0.58	0.1055\\
0.59	0.1061\\
0.6	0.1036\\
};
\addlegendentry{$I_{\text{BSCD,geom}}(0.1, p_d)$}
\end{axis}
\end{tikzpicture}%
    
    \caption{Monte Carlo estimates of the information rate $I_{\text{BSCD,geom}}(p,p_d)$ of the BSC with error probability $p$, geometric duplications with probability $p_d$, and a $\mathsf{Ber}(1/2)$ source. The information rate $I_{\text{SC,geom}}(p_d)$ is the case when $p=0$, which corresponds to a sticky channel with capacity $C_{\text{SC,geom}}(p_d)$ and is computed numerically \cite{Mitzenmacher2007}.
    }
    \label{fig:bsc_geom}
\end{figure}

\section{Conclusion}
This paper studied the noisy duplication channel and established its channel capacity as the Markov-constrained Shannon capacity in the special case of ergodic Markov sources. It was proven through the AEP for noisy duplication processes, which was related to the AEP for HSMPs. The motivation for these results was to provide a theoretical underpinning to the numerical results in \cite{McBain2022, McBain2024}.
A significant open problem that remains is the construction of codes that can achieve the Markov-constrained capacity in practice, which is still in its early stages \cite{Vidal2023,Vidal2023_b,McBain2023,Vidal2024}. Further progress in this area could greatly advance DNA storage systems based on nanopore sequencing.

 \appendices

 \section{Preliminary lemmas}

\begin{lemma_appendix}
$H(Y,A) - H(Y) \leq 2H(A)$.
\label{ent_diff_ineq}
\end{lemma_appendix}
\begin{proof}
    We start from the inequality $|H(Y) - H(Y|A)| \leq H(A)$ \cite[~Lemma 4A.1]{Gallager1968}. Then $H(Y,A) - H(Y) = H(Y|A) + H(A) - H(Y)
    = |H(Y|A) + H(A) - H(Y)| 
    \leq |H(Y|A) - H(Y)| + H(A) \leq 2H(A)$.
\end{proof}

\begin{lemma_appendix}
$\lim_{d \rightarrow \infty} \frac{1}{d} H(T_d) = 0$.
\label{ent_jump_time}
\end{lemma_appendix}
\begin{proof}
    By Hoeffding's inequality, we have the concentration inequality $\mathbb{P}(|T_d - \mathbb{E}[T_d]| > d) \leq 2 \exp(-C d)$ for some constant $C > 0$. Then the proof follows using the same argument as in Lemma \ref{lem:rand_indexed_ent_rate}. Alternatively, we could use the Gaussian approximation.\end{proof}

\section{Proof of Theorem \ref{output_AEP}}\label{appendixB}

\textbf{Split problem into three parts.} We want to show that for any $\epsilon > 0$ and $\delta > 0$ we have $\mathbb{P}(|g_m - H| > \epsilon) 
< \delta$ for sufficiently large $m$. Let $\Delta_1 = \{|g_{m,d} - g_m|
    > \epsilon/3\}$, $\Delta_2 = \{|g_{m,d} - H_{\infty,d}|
    > \epsilon/3\}$, and $\Delta_3 = \{|H_{\infty,d} - H| > \epsilon/3\}$. Then we have the bound
\begin{align}
    &\mathbb{P}(|g_m - H| > \epsilon) \\
    &\leq \mathbb{P}(|g_{m,d}-g_m| + |g_{m,d} - H_{\infty,d}| + |H_{\infty,d} - H| > \epsilon) \\
    &\leq \underbrace{\mathbb{P}(\Delta_1)}_{\text{Term 1}} + \underbrace{\mathbb{P}(\Delta_2)}_{\text{Term 2}} + \underbrace{\mathbb{P}(\Delta_3)}_{\text{Term 3}}
\end{align}
 and we want to show that each term is bounded by $\delta/3$, where $d$ is fixed and does not affect our choice of $m$. We will prove this in three parts and then combine the results.

\vspace{2mm}
\textbf{Convergence of entropy rates with markers.}  Observe that $g_{m,d}$ is decreasing towards $g_{m}$ for increasing $d$ (with equality at $d=m$), then $g_{m,d} - g_{m} \geq 0$ and $\mathbb{E}[g_{m,d} - g_{m}] \geq 0$. From Lemma A\ref{ent_diff_ineq}, for a fixed $d$ we have the bound $\mathbb{E}[g_{m,d} - g_{m}] = H_{m,d} - H_m \leq \frac{2}{m}H(\mathcal{M}_{m,d}) = \frac{2}{d}H(T_d)$ uniformly in $m$. Then take the limit in $m$ to get $H_{\infty,d} - H \leq \frac{2}{d}H(T_d)$. By Lemma A\ref{ent_jump_time}, there exists a $D$ such that $H_{m,d} - H \leq \epsilon/3$ and then $\Delta_3$ occurs with probability $0$ for all $d \geq D$. This proves almost sure convergence for Term 3.

\vspace{2mm}
\textbf{Convergence of sample entropies with markers.} For any $\delta' > 0$ there exists a $D'$ such that such that $H_{m,d} - H_m \leq \delta'$ and $\delta'/\epsilon \leq \delta/3$ for all $d\geq D'$. Recalling that $g_{m,d} - g_{m} \geq 0$, we can apply Markov's inequality to get
\begin{align}
    \mathbb{P}(\Delta_1) &\leq 3\mathbb{E}[g_{m,d} - g_{m}]/\epsilon\\
    &\leq 3 (H_{m,d} - H_m)/\epsilon \\
    &\leq \delta' /\epsilon \leq \delta/3
\end{align}
for all $d \geq D'$. This proves convergence in probability for Term 1, which will now be linked with the convergence result of Term 3 by showing the convergence of Term 2. 
\vspace{2mm}

\textbf{Shannon-McMillan-Breiman theorem with markers.} Consider the $d$-step MP $\{V_n\}$ where $V_n = S_{(n-1)d + 1}^{n d}$. Since it embeds the ergodic MP $\{S_{\ell}\}$, which is a one-to-one mapping between probability spaces with identical probability measure, it is also ergodic. Then we have the HMP $\{W_n\}$ where $W_n = Y_{T_{(n-1)d}+1}^{T_{nd}}$. Since the $d$-step MP is ergodic, then the resulting HMP is ergodic \cite{Leroux1992}. Apply SMB to get $g_{nd,d}\rightarrow H_{\infty,d}$ almost surely as $n\rightarrow\infty$ for a fixed $d$ (hence $m\rightarrow\infty$). Then there exists a sequence $\{N(d)\}$ such that $|g_{n d,d} - H_{\infty,d}| \leq \epsilon/6$ for all $n \geq \overline{N}(d) = \max \{N(n_1): n_1 \leq d \}$ with probability $1$. We force monotonicity in $\{\overline{N}(d)\}$ in order to later specify a lower bound on $m$ in terms of a lower bound on $d$.

By applying the chain rule, observe that $g_{m,d} = g_{nd, d} + g'_{i}$ for a sample entropy $g'_i \geq 0$ with mean bounded as $\mathbb{E}[g'_{i}] \leq \frac{1}{m} H(Y_{1}^{T_{i}}) \leq C/m$ for some constant $C>0$. By Markov's inequality, we have $\mathbb{P}(g'_{i} \geq \epsilon/6) \leq 6C/m\epsilon$ and hence there exists an $M$ such that $6C/m\epsilon \leq \delta/3$ for all $m\geq M$. Then
\begin{align}
    \mathbb{P}(\Delta_2) &\leq \mathbb{P}(|g_{nd,d} - H_{\infty, d}| + g'_i > \epsilon/3)\\
    &\leq \mathbb{P}(|g_{nd,d} - H_{\infty, d}| > \epsilon/6) + \mathbb{P}(g'_i > \epsilon/6)\\
    &\leq 6C/m\epsilon \leq \delta/3
\end{align}
for all $m \geq \max\{\overline{N}(d) d, M\}$ and for any fixed $d$, where we use $m \geq nd$ to specify a lower bound on $m$ in terms of $\overline{N}(d)$. This proves convergence in probability for Term 2.

\vspace{2mm}

Combining all three terms, we have $\mathbb{P}(|g_m - H| > \epsilon) 
 \leq 2\delta/3$ for all $m \geq \max\{\overline{N}(d) d, M\}$ and for all $d \geq \overline{D} = \max\{D,D'\}$. Hence, for any $\epsilon > 0$ and $\delta >0$ we choose $d=\overline{D}$ so that $\mathbb{P}(|g_m - H| > \epsilon) 
 < \delta$ for all $m \geq \max\{\overline{N}(\overline{D}) \overline{D}, M\}$.

\balance
\bibliographystyle{plain}
\bibliography{refs}

\end{document}